\newtheorem{Proposition}{Proposition}
\newtheorem{proposition}[Proposition]{Proposition}
\newcounter{problem}
\newcounter{save@equation}
\newcounter{save@problem}
\newenvironment{problem}
{\setcounter{problem}{\value{save@problem}}%
  \setcounter{save@equation}{\value{equation}}%
  \let\c@equation\c@problem
  \subequations
}
{\endsubequations
  \setcounter{save@problem}{\value{equation}}%
  \setcounter{equation}{\value{save@equation}}%
}
\begin{document}
\title{ \huge{ NOMA Beamforming in SDMA Networks: \\Riding on Existing   Beams or Forming   New Ones?    }}

\author{ Zhiguo Ding, \IEEEmembership{Fellow, IEEE}    \thanks{ 
  
\vspace{-2em}

    Z. Ding
 is    with the School of
Electrical and Electronic Engineering, the University of Manchester, Manchester, UK (email: \href{mailto:zhiguo.ding@manchester.ac.uk}{zhiguo.ding@manchester.ac.uk}.
 

  }\vspace{-2em}}
 \maketitle

\begin{abstract}
 In this letter, the design of non-orthogonal multiple access (NOMA) beamforming is investigated in a spatial division multiple access (SDMA) legacy system. In particular, two popular beamforming strategies  in the NOMA literature, one to use existing SDMA beams and the other to form new beams, are adopted and compared.  The studies carried out in the letter  show that the two strategies realize different tradeoffs between system performance and complexity. For example, riding on existing beams  offers  a significant reduction in computational complexity, at the price of a slight performance loss.   Furthermore, this simple strategy  can realize the optimal performance when the users' channels are structured. 
\end{abstract}\vspace{-0.1em}

\begin{IEEEkeywords}
Non-orthogonal multiple access (NOMA), spatial division multiple access (SDMA), beamforming. 
\end{IEEEkeywords}
\vspace{-1.5em} 
 \section{Introduction}
 The design of beamforming is one of the most studied  topics in the research area of   non-orthogonal multiple access (NOMA) \cite{mojobabook}. In principle, there are two types of NOMA beamforming strategies. One is to encourage multiple users to share the same beamformer, whose rationale is to treat orthogonal spatial directions as a type of bandwidth resources and try to serve as many users as possible on  each spatial  direction \cite{Zhiguo_mimoconoma,7582424}. The other is to generate many non-orthogonal beamformers, where a single user is served on each of these non-orthogonal beams  \cite{7015589,7557079}. Both    principles have been shown superior  to   orthogonal multiple access (OMA),  and    applicable to various communication scenarios \cite{9371391,9389757,9531461}. 

This letter is to consider the NOMA beamforming design  in a  legacy system based on spatial division multiple access (SDMA), i.e., multiple primary users have been served via SDMA beamforming and additional secondary users are to be served via NOMA.  Following the     beamforming strategies    in the NOMA literature, two straightforward designs are to either use the existing SDMA beams or generate new ones dedicated for the secondary users. The optimization problems associated to the two strategies  are formulated and solved in the letter, where the potential loss of the optimality due to the use of semidefinite relaxation (SDR) is also analyzed.  The studies carried out in the letter  show that  riding on existing beams  offers  a significant reduction in computational complexity, at the price of a slight performance loss.   Furthermore, this simple strategy  can realize the optimal performance when the users' channels are structured, which opens up promising applications of reconfigurable intelligent surfaces (RISs) \cite{irs1}. 
 \vspace{-0.5em}
 \section{System Model}
Consider an   SDMA legacy system with  $K$ single-antenna primary users, denoted by ${\rm U}_k$, $1\leq k\leq K$, and one base station equipped with $N$ antennas, where the primary users are served by   using the   following zero-forcing (ZF)     precoder:
\begin{align}
\mathbf{P} = c \mathbf{H}\left(\mathbf{H}^H\mathbf{H}\right)^{-1},
\end{align}
where $\mathbf{H}=\begin{bmatrix}\mathbf{h}_1&\cdots &\mathbf{h}_K \end{bmatrix}$, $\mathbf{h}_k$ denotes ${\rm U}_k$'s channel vector, $c$ is the power normalization parameter, i.e., $c^2  = \frac{P_{\rm SDMA}}{tr \left(\mathbf{H} \left(\mathbf{H}^H\mathbf{H}\right)^{-2}\mathbf{H}^H\right)}$, $tr(\cdot)$ denotes the trace operation, and $P_{\rm SDMA}$ denotes the transmit power used for the primary users.

By using NOMA,    additional secondary users can be served together with  the primary users. For   illustration purposes, this letter focuses on  serving a single secondary user, denoted by ${\rm U}_0$.
Denote $\mathbf{x}$ by the signal sent by the base station, which can be written as follows: 
\begin{align}\label{signal}
\mathbf{x}=\mathbf{P}\mathbf{s}+\mathbf{w}s_0,
\end{align}
where $\mathbf{s}$ and $s_0$ are the signals sent to the primary users and ${\rm U}_0$, respectively, and $\mathbf{w}$ denotes ${\rm U}_0$'s beamforming vector. 

In this letter, it is assumed that the secondary user directly decodes its own signal by treating the primary users' signals as noise, because of its limited decoding capability.
 Depending on  the primary users'   SIC strategies,    ${\rm U}_0$'s data rate can be expressed  differently as follows.

\begin{itemize}
\item 
If none of the primary users carries out SIC, i.e., they   decode their  own signals   directly,  the secondary user's data rate is given by
\begin{align}\label{pr01}
R_0= \log\left(1 +\frac{|\mathbf{g}^H\mathbf{w}|^2}{|\mathbf{g}^H\mathbf{P}|^2+\sigma^2 }
 \right),
\end{align}
where $\mathbf{g}$ denotes the secondary user's channel vector and $\sigma^2$ denotes the noise power.
In order to strictly guarantee the primary users' QoS requirements,  the following   constraints are required: 
\begin{align}
\label{constraintx} 
  \log\left(1 +\frac{|\mathbf{h}_k^H\mathbf{p}_k|^2}{|\mathbf{h}_k^H\mathbf{w}|^2+\sigma^2 }
 \right)\geq R_i, 1\leq k \leq K,
\end{align}
where $\mathbf{p}_k$ denotes the $k$-th column of $\mathbf{P}$, and $R_i$ denotes ${\rm U}_i$'s target data rate.
Due to the use of ZF,    the interference  term, $\mathbf{h}_k^H\mathbf{p}_j$, $k\neq j$, can be discarded  in \eqref{constraintx} . 

\item Consider that  some primary users   decode the secondary user's signal first before decoding their owns. Denote the set containing these users by $\mathcal{S}$, and its complementary set by  $\mathcal{S}^c$.  The secondary user's data rate is given by
\begin{align}\label{eq05}
\tilde{R}_0=\min&\left\{  R_0, \log\left(1 +\frac{|\mathbf{h_i}^H\mathbf{w}|^2}{|\mathbf{h}_i^H\mathbf{P}|^2+\sigma^2 }
 \right), i \in \mathcal{S}  \right\}.
\end{align}
The constraint shown in \eqref{constraintx} is still needed for ${\rm U}_j$, $ j \in \mathcal{S}^c $, but not  for ${\rm U}_i$, $i \in \mathcal{S} $, since ${\rm U}_i$, $i \in \mathcal{S} $, is able to remove $s_0$ successfully and hence experiences the same performance as in conventional SDMA. 

\item If all the   primary users   decode the secondary user's signal first, the secondary user's data rate becomes
\begin{align}\label{rp6}
\bar{R}_0=\min&\left\{ R_0, \log\left(1 +\frac{|\mathbf{h_k}^H\mathbf{w}|^2}{|\mathbf{h}_k^H\mathbf{P}|^2+\sigma^2 }
 \right)  , 1\leq k \leq K\right\}.
\end{align}
The constraint shown in \eqref{constraintx}  is no longer needed, since the all the primary users experience the same performance as in conventional  SDMA.  
\end{itemize}

\section{Two  Strategies of Beamforming}
This letter is to investigate the use of two particular types of beamforming, termed Strategies I and II, as described in the following. 
\subsection{Strategy I - Riding on an Existing Beam}
The secondary user can simply use one of the existing beams, i.e., $\mathbf{w}=\sqrt{\alpha} \mathbf{p}_i$, where $ \mathbf{p}_i$ is ${\rm U}_i$'s beam,   $\alpha$ is to meet the power constraint, i.e., $\alpha\leq \frac{P_0}{|\mathbf{p}_i|^2}$, and $P_0$ denotes the transmit power budget for ${\rm U}_0$.

 \begin{itemize}
 \item 
  If ${\rm U}_i$ decodes its own signal   directly, by using 
  \eqref{pr01} and \eqref{constraintx}, the optimization problem of interest is given by
\begin{problem}\label{pb:1} 
  \begin{alignat}{2}
\hspace{-1em}\underset{\alpha}{\rm{max}} &     
\log\left(1 +\frac{\alpha|\mathbf{g}^H\mathbf{p}_i|^2}{|\mathbf{g}^H\mathbf{P}|^2+\sigma^2 }
 \right)    \\
\hspace{-1em}\rm{s.t.} &    \log\left(1 +\frac{|\mathbf{h}_i^H\mathbf{p}_i|^2}{\alpha|\mathbf{h}_i^H\mathbf{p}_i|^2+\sigma^2 }
 \right)\geq R_i ,   \alpha\leq \frac{P_0}{ |\mathbf{p}_i|^2}  
.
  \end{alignat}
\end{problem} 
Note that ${\rm U}_0$  does not cause interference to   ${\rm U}_j$, $j\neq i$, due to the use of ZF.  With some algebraic manipulations, the optimal solution is obtained as follows:
\begin{align}
\alpha^*_{I,i} = \min\left\{ \frac{P_0}{ |\mathbf{p}_i|^2} ,  \frac{\tau_i}{|\mathbf{h}_i^H\mathbf{p}_i|^2} \right\},
\end{align}
for $\mathbf{w}=\sqrt{\alpha} \mathbf{p}_i$, where $\tau_i = \max\left\{0, \frac{|\mathbf{h}_i^H\mathbf{p}_i|^2}{2^{R_i}-1}-\sigma^2\right\}$.
 
\item 
If ${\rm U}_i$ decodes ${\rm U}_0$'s signal first, by using \eqref{constraintx} and \eqref{rp6}, the optimization problem of interest is given by
\begin{problem}\label{pb:2} 
  \begin{align} 
\underset{\alpha}{\rm{max}} ~  &   
 \min\left\{\log\left(1 +\frac{\alpha |\mathbf{g}^H\mathbf{p}_i|^2}{|\mathbf{g}^H\mathbf{P}|^2+\sigma^2 }
 \right), \right.\\ \nonumber &\left.\log\left(1 +\frac{\alpha |\mathbf{h_i}^H\mathbf{p}_i|^2}{|\mathbf{h}_i^H\mathbf{P}|^2+\sigma^2 }
 \right)  \right\}
\rm{s.t.}  \quad    \alpha \leq \frac{P_0  }{ |\mathbf{p}_i|^2}\label{1st:3}
,
  \end{align}
\end{problem} 
whose solution is given by $ \alpha^*_{II,i} = \frac{P_0  }{ |\mathbf{p}_i|^2}$. 
 
\end{itemize}  
By comparing the data rates realized by $\alpha^*_{II,i} $ and $\alpha^*_{I,i}$,  the optimal solution for Strategy I can be found straightforwardly.

\subsection{Strategy II - Forming a New Beam}
Forming a new beam will be more complicated than using an existing SDMA beam, as shown in the following.
\begin{itemize}

\item If all the primary users decode their own signals   directly, by using 
  \eqref{pr01} and \eqref{constraintx}, the optimization problem of interest is given by
\begin{problem}\label{pb:3} 
  \begin{align}\nonumber
& \underset{\mathbf{w}}{\rm{max}}      
\log\left(1 +\frac{|\mathbf{g}^H\mathbf{w}|^2}{|\mathbf{g}^H\mathbf{P}|^2+\sigma^2 }
 \right)  \quad \rm{s.t.} \quad   |\mathbf{w}|^2\leq P_0,   \\ 
 &    \log\left(1 +\frac{|\mathbf{h}_k^H\mathbf{p}_k|^2}{|\mathbf{h}_k^H\mathbf{w}|^2+\sigma^2 }
 \right)\geq R_k, 1\leq k \leq K ,   
  \end{align}
\end{problem} 
which can be recast as the following equivalent form: 
 \begin{problem}\label{pb:4} 
  \begin{alignat}{2}
\hspace{-1em}\underset{\mathbf{w}}{\rm{max}} &\quad    
 |\mathbf{g}^H\mathbf{w}|^2  \label{1obj:4} \\
\hspace{-1em}\rm{s.t.} & \quad   |\mathbf{h}_k^H\mathbf{w}|^2\leq \tau_k, 1\leq k\leq K ,   |\mathbf{w}|^2\leq P_0  \label{1st:4}
.
  \end{alignat}
\end{problem}  
Problem \ref{pb:4}  is a nonconvex quadratically constrained quadratic program and can be solved by applying SDR \cite{5447068}, i.e., problem \ref{pb:4}  can be relaxed as follows:
 \begin{problem}\label{pb:5} 
  \begin{alignat}{2}
\hspace{-1em}\underset{\mathbf{W}}{\rm{max}} &\quad    
 tr\left(\mathbf{G}\mathbf{W} \right) \label{1obj:5} \\\hspace{-1em}
\rm{s.t.} & \quad   tr\left( \mathbf{H}_k\mathbf{W}\right) \leq \tau_k, 1\leq k \leq K ,  tr\left(\mathbf{W}\right)  \leq P_0  \label{1st:5}
,
  \end{alignat}
\end{problem}  
 where the rank-one constraint is omitted, $\mathbf{G}=\mathbf{g}\mathbf{g}^H$, and $\mathbf{H}_k=\mathbf{h}_k\mathbf{h}_k^H$. Problem \ref{pb:5} can be straightforwardly solved by applying optimization solvers. 
 
\item 
If ${\rm U}_i$, $i\in \mathcal{S}$, decodes the secondary user's signal first, by using \eqref{constraintx} and \eqref{eq05}, the optimization problem of interest is given by
\begin{problem}\label{pb:6} 
  \begin{align} 
\underset{\mathbf{w}}{\rm{max}}   &   
 \min\left\{\log\left(1 +\frac{|\mathbf{g}^H\mathbf{w}|^2}{|\mathbf{g}^H\mathbf{P}|^2+\sigma^2 }
 \right), \right.\\  \nonumber &\left. \log\left(1 +\frac{|\mathbf{h_i}^H\mathbf{w}|^2}{|\mathbf{h}_i^H\mathbf{P}|^2+\sigma^2 }
 \right)  , i\in\mathcal{S} \right\}
    \\\nonumber
\rm{s.t.} &    \log\left(1 +\frac{|\mathbf{h}_j^H\mathbf{p}_j|^2}{|\mathbf{h}_j^H\mathbf{w}|^2+\sigma^2 }
 \right)\geq R_j,  j\in\mathcal{S}^c ,     |\mathbf{w}|^2\leq P_0   
.
  \end{align}
\end{problem} 
Problem \ref{pb:6} can be recast equivalently as follows: 
 \begin{problem}\label{pb:7} 
  \begin{alignat}{2}
\underset{t, \mathbf{w}}{\rm{max}} &\quad    
t
  \label{1obj:7} \\
\rm{s.t.} & \quad  
\log\left(1 +\frac{|\mathbf{g}^H\mathbf{w}|^2}{|\mathbf{g}^H\mathbf{P}|^2+\sigma^2 }
 \right) \geq t\\&\quad
  \log\left(1 +\frac{|\mathbf{h_i}^H\mathbf{w}|^2}{|\mathbf{h}_i^H\mathbf{P}|^2+\sigma^2 }
 \right) \geq t,  i\in\mathcal{S}\\&\quad 
\log\left(1 +\frac{|\mathbf{h}_j^H\mathbf{p}_j|^2}{|\mathbf{h}_j^H\mathbf{w}|^2+\sigma^2 }
 \right)\geq R_j,  j\in\mathcal{S}^c,    \\&\quad   |\mathbf{w}|^2\leq P_0  
,
  \end{alignat}
\end{problem} 
which can be further simplified as follows:
 \begin{problem}\label{pb:8} 
  \begin{alignat}{2}
\underset{z, \mathbf{w}}{\rm{max}} &\quad    
z
  \label{1obj:8} \\
\rm{s.t.} & \quad  
 |\mathbf{g}^H\mathbf{w}|^2  
  \geq a_0 z\\&\quad
  |\mathbf{h_i}^H\mathbf{w}|^2   \geq a_i z,i\in\mathcal{S}\\&\quad 
 |\mathbf{h}_j^H\mathbf{w}|^2  \leq \tau_j ,  j\in\mathcal{S}^c,     |\mathbf{w}|^2\leq P_0  ,
  \end{alignat}
\end{problem} 
where $z=2^t-1$, $a_0=|\mathbf{g}^H\mathbf{P}|^2+\sigma^2$, and $a_i=|\mathbf{h}_i^H\mathbf{P}|^2+\sigma^2$. Again, problem \ref{pb:8} can be solved by applying SDR as shown previously. 

\item If all the primary users decode the secondary user's signal first before decoding their own, by using \eqref{constraintx} and \eqref{rp6}, the optimization problem of interest is given by
\begin{problem}\label{pb:9} 
  \begin{align}
\underset{\mathbf{w}}{\rm{max}} \quad    &
 \min\left\{\log\left(1 +\frac{|\mathbf{g}^H\mathbf{w}|^2}{|\mathbf{g}^H\mathbf{P}|^2+\sigma^2 }
 \right),\right. \\\nonumber &\left.  \log\left(1 +\frac{|\mathbf{h_i}^H\mathbf{w}|^2}{|\mathbf{h}_i^H\mathbf{P}|^2+\sigma^2 }
 \right) , 1\leq i \leq K \right\}
 \\
\rm{s.t.}  \quad    &   |\mathbf{w}|^2\leq P_0   
,
  \end{align}
\end{problem} 
which can be   recast equivalently as follows:
  \begin{problem}\label{pb:10} 
  \begin{alignat}{2}
\underset{z, \mathbf{w}}{\rm{max}} &\quad    
z
  \label{1obj:10} \\
\rm{s.t.} & \quad  
 |\mathbf{g}^H\mathbf{w}|^2  
  \geq a_0 z\\&\quad
  |\mathbf{h_i}^H\mathbf{w}|^2   \geq a_i z, 1\leq i \leq K, 
   |\mathbf{w}|^2\leq P_0   
.
  \end{alignat}
\end{problem}
Similar to problem \ref{pb:4},  problem \ref{pb:10} can be solved by applying SDR.  
\end{itemize}

After comparing the solutions obtained for problems \ref{pb:3}, \ref{pb:6} and \ref{pb:9}, a solution can be obtained for Strategy II accordingly. 
 
 \subsection{Optimality of the Obtained SDR Solutions}
 Because the SDR solutions have been obtained by removing the rank-one constraint, they are not guaranteed to be optimal. The optimality of the SDR solutions  for the two-user special case can be established as follows. Note that this conclusion can also be proved by applying Theorem 3.2 in \cite{5233822}, but the   steps shown  in the proof will be useful for the follow-up discussions about the general case with $K$ users.  
 
 \begin{proposition}\label{proposition1}
 For the two-user special case with random realizations of complex-valued channel coefficients,  the obtained solutions via SDR are optimal.  
 \end{proposition}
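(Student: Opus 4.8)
The plan is to prove that, in the two-user case (one primary and one secondary user, i.e.\ $K=1$), every optimal SDR solution has rank one almost surely, so that discarding the rank-one constraint causes no loss of optimality. Rather than invoke the rank bound of \cite{5233822} directly, I would carry out a KKT-based analysis whose individual steps transfer to the general-$K$ discussion. When $K=1$ the partition $\mathcal{S}\cup\mathcal{S}^c$ is trivial, so problem \ref{pb:8} reduces to either problem \ref{pb:5} or problem \ref{pb:10}; it thus suffices to treat these two relaxations.

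For problem \ref{pb:5}, let $\lambda_1,\mu\geq 0$ be the multipliers of the interference and power constraints and let $\mathbf{Z}\succeq 0$ be the multiplier of $\mathbf{W}\succeq 0$. Stationarity yields $\mathbf{Z}=\mu\mathbf{I}+\lambda_1\mathbf{h}_1\mathbf{h}_1^H-\mathbf{g}\mathbf{g}^H$, and complementary slackness $\mathbf{Z}\mathbf{W}^*=0$ gives $\mathrm{range}(\mathbf{W}^*)\subseteq\mathrm{null}(\mathbf{Z})$. I would first show $\mu>0$: if $\mu=0$ then any $\mathbf{v}\perp\mathbf{h}_1$ with $\mathbf{g}^H\mathbf{v}\neq 0$ yields $\mathbf{v}^H\mathbf{Z}\mathbf{v}=-|\mathbf{g}^H\mathbf{v}|^2<0$, contradicting $\mathbf{Z}\succeq 0$ (such $\mathbf{v}$ exists whenever $\mathbf{g}\not\parallel\mathbf{h}_1$). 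Since $\mathbf{Z}$ then acts as $\mu\mathbf{I}$ off $\mathrm{span}\{\mathbf{h}_1,\mathbf{g}\}$, we have $\mathrm{null}(\mathbf{Z})\subseteq\mathrm{span}\{\mathbf{h}_1,\mathbf{g}\}$. If this null space were two-dimensional, then $\mathbf{Z}\mathbf{h}_1=0$; expanding and using the independence of $\mathbf{h}_1,\mathbf{g}$ forces $\mu+\lambda_1|\mathbf{h}_1|^2=0$, which is impossible for $\mu>0$, $\lambda_1\geq 0$. Hence $\dim\mathrm{null}(\mathbf{Z})=1$, and as $\mathbf{W}^*\neq 0$ at the optimum, $\mathrm{rank}(\mathbf{W}^*)=1$.

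Problem \ref{pb:10} follows the same template but with the opposite sign on $\mathbf{g}\mathbf{g}^H$: stationarity in $\mathbf{w}$ gives $\mathbf{Z}=\mu\mathbf{I}-\nu_0\mathbf{g}\mathbf{g}^H-\nu_1\mathbf{h}_1\mathbf{h}_1^H$ with $\nu_0,\nu_1\geq 0$, while stationarity in $z$ gives the normalization $\nu_0 a_0+\nu_1 a_1=1$. Again $\mu>0$ (else $\mathbf{Z}\preceq 0$ forces $\nu_0=\nu_1=0$, violating the normalization) and $\mathrm{null}(\mathbf{Z})\subseteq\mathrm{span}\{\mathbf{g},\mathbf{h}_1\}$. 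Assuming a two-dimensional null space, $\mathbf{Z}\mathbf{g}=\mathbf{Z}\mathbf{h}_1=0$ now forces $\nu_1(\mathbf{h}_1^H\mathbf{g})=\nu_0(\mathbf{g}^H\mathbf{h}_1)=0$; provided $\mathbf{g}^H\mathbf{h}_1\neq 0$ this gives $\nu_0=\nu_1=0$, again contradicting $\nu_0 a_0+\nu_1 a_1=1$. Thus $\mathrm{rank}(\mathbf{W}^*)=1$ here too, the single obstruction being the orthogonal configuration $\mathbf{g}^H\mathbf{h}_1=0$, under which both eigenvalues of $\mathbf{Z}$ on $\mathrm{span}\{\mathbf{g},\mathbf{h}_1\}$ vanish and a rank-two solution can survive.

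The randomness of the complex channels enters precisely to dispose of the two degenerate events invoked above, $\mathbf{g}\parallel\mathbf{h}_1$ and $\mathbf{g}^H\mathbf{h}_1=0$, each of which occupies a measure-zero subset of the channel space and therefore occurs with probability zero. I expect the main difficulty to be the careful bookkeeping of these non-degeneracies across every branch—verifying $\mu>0$, confirming $\mathbf{W}^*\neq 0$ so that the rank is exactly one rather than zero, and checking that the union of all exceptional channel configurations remains null—so that the rank-one property, and hence SDR optimality, holds with probability one.
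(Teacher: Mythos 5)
There is a genuine gap, and it is one of interpretation rather than algebra: you have proved the wrong special case. In this paper the ``two-user special case'' means \emph{two primary users} ($K=2$, with $N=2$ base-station antennas) plus the secondary user ${\rm U}_0$; this is clear from the paper's own proof, where problem \ref{pb:11} carries the two constraints $tr(\mathbf{H}_i\mathbf{W})\geq a_i z$, $1\leq i\leq 2$, and from the follow-up discussion of the exceptional configurations ($\mathbf{h}_1\perp\mathbf{h}_2$, real-valued channels, then general $K>2$). Your reading ($K=1$: one primary plus one secondary) is what lets you reduce problem \ref{pb:8} to problems \ref{pb:5} or \ref{pb:10} with a single $\mathbf{h}_1$, and it is precisely this reduction that makes your null-space bookkeeping work: all of your arguments hinge on the geometry of \emph{two} vectors $\mathbf{g},\mathbf{h}_1$ (linear independence, $\mathbf{g}^H\mathbf{h}_1\neq 0$, containment of $\mathrm{null}(\mathbf{Z})$ in $\mathrm{span}\{\mathbf{g},\mathbf{h}_1\}$). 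In the actual two-user case there are \emph{three} dyads $\mathbf{G},\mathbf{H}_1,\mathbf{H}_2$ in a $2\times 2$ space: the span of $\{\mathbf{g},\mathbf{h}_1,\mathbf{h}_2\}$ is the whole space, so the containment step is vacuous, and pairwise inner products no longer pin down the multipliers --- a nonnegative combination of the three dyads can equal $\lambda_4\mathbf{I}$ without any parallelism or orthogonality between $\mathbf{g}$ and an $\mathbf{h}_i$ (e.g., when $\mathbf{h}_1\perp\mathbf{h}_2$, the very case the paper flags). Consistently with this, the measure-zero events your proof excludes ($\mathbf{g}\parallel\mathbf{h}_1$, $\mathbf{g}^H\mathbf{h}_1=0$) are not the exceptional sets that actually matter for the proposition.

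The step your argument never confronts --- and the heart of the paper's proof --- is showing that the PSD multiplier $\boldsymbol\lambda$ of the constraint $\mathbf{W}\succeq 0$ cannot vanish when all three dyads are present. The paper does this by observing that $\boldsymbol\lambda=0$ turns the stationarity condition \eqref{kkt1} into the matrix equation \eqref{eqlity1}: for fixed $\lambda_4$ this is an overdetermined linear system in the three unknowns $\lambda_1,\lambda_2,\lambda_3$ (more independent real equations than unknowns for a $2\times 2$ Hermitian identity), which has no solution for random complex-valued channels; hence $\boldsymbol\lambda\neq 0$, and then \eqref{kktxxx} together with $\mathbf{W}\neq 0$ and $N=2$ forces $\mathrm{rank}(\mathbf{W})=1$. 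This genericity-of-an-overdetermined-system argument is also what powers the paper's later remarks: it fails exactly when $\mathbf{h}_1\perp\mathbf{h}_2$ or when the channels are real-valued (where the equation count drops), and its insufficiency for $K>2$ is what motivates the eigenvalue discussion following the proposition. Your KKT manipulations are essentially sound for $K=1$, but they do not prove the proposition as stated, and they would not support the paper's subsequent analysis of the general case.
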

\begin{proof}
The proposition can be proved by simply showing that the obtained SDR solutions are always rank-one. Without loss of generality, problem \ref{pb:9} is focused, where the proofs for the other cases can be obtained straightforwardly. After applying SDR, problem \ref{pb:9} can be expressed  as follows:
  \begin{problem}\label{pb:11} 
  \begin{alignat}{2}
\underset{z, \mathbf{w}}{\rm{min}} &\quad    
-z
  \label{1obj:11} \\
\rm{s.t.} & \quad  
 tr\left( \mathbf{G} \mathbf{W}\right) 
  \geq a_0 z\\&\quad
 tr\left( \mathbf{H}_i\mathbf{W}\right)   \geq a_i z, 1\leq i \leq 2\\&\quad 
  tr\left(\mathbf{W}\right) \leq P_0 , 
  \mathbf{W}  \succeq 0  ,
  \end{alignat}
\end{problem} 
where the rank-one constraint  is ignored.  
The corresponding Lagrange can be written as follows \cite{8375979}:
\begin{align}
&L(z,\mathbf{W}, \lambda_1, \cdots,\lambda_4,\boldsymbol \lambda)\\\nonumber
=&-z + \lambda_1\left(a_0 z - tr\left( \mathbf{G} \mathbf{W}\right)  \right)+\sum^{3}_{i=2} \lambda_i\left(a_i z - tr\left( \mathbf{H}_i \mathbf{W}\right)  \right)
\\\nonumber
&+\lambda_4\left(  tr\left(  \mathbf{W}\right) -P_0 \right)-tr\left(\boldsymbol \lambda \mathbf{W}\right),
\end{align}
where $\lambda_i$ denotes the Lagrange multipliers. Because problem \ref{pb:11} is convex, the use of the Karush–Kuhn–Tucker (KKT) conditions is  applicable and leads  to the following: 
\begin{align}\label{kkt1}
 - \lambda_1   \mathbf{G}  -\sum^{3}_{i=2} \lambda_i  \mathbf{H}_i   
&+\lambda_4\mathbf{I}- \boldsymbol \lambda =0.
\end{align}
and
\begin{align}\label{kktxxx}
\boldsymbol \lambda \mathbf{W}=0.
\end{align}
In the remainder of the proof, we focus on the case with  $\mathbf{W}\neq 0$ to avoid the trivial all-zero solution.   

If $\boldsymbol \lambda=0$, the use of \eqref{kkt1} yields  the following:
 \begin{align}\label{eqlity1}
 \lambda_4\mathbf{I}= \lambda_1   \mathbf{G}  +\sum^{3}_{i=2} \lambda_i  \mathbf{H}_i  .
\end{align}
Note that $ \lambda_1   \mathbf{G}  +\sum^{3}_{i=2} \lambda_i  \mathbf{H}_i $ is a $2\times 2$ complex-valued matrix. For a fixed $\lambda_4$,   \eqref{eqlity1} represents a set of linear equations with $3$ unknown variables ($\lambda_1,\ldots, \lambda_3$) and $8$ equations. Because the channel coefficients are assumed to be random variables, no solution     exists for this overdetermined set of  linear equations, and hence  $\boldsymbol \lambda\neq 0$.
By using \eqref{kktxxx} and the fact that $\boldsymbol \lambda\neq 0$,   one can conclude that $\mathbf{W}$ is not full-rank. For the two-user case, if $\mathbf{W}$ is not full-rank,  the   rank of a feasible non-zero  $\mathbf{W}$ has to be one, which proves the proposition. 
\end{proof}

Note that there are a few situations, where   SDR solutions are not rank-one, as explained in the following. 
\subsubsection{The two-user case with $\mathbf{h}_1\perp \mathbf{h}_2$}
If $\mathbf{h}_1\perp \mathbf{h}_2$, by choosing $\lambda_1=0$, and $\lambda_{i-1}=\frac{1}{|\mathbf{h}_i|^2}$ for $2\leq i \leq 3$, $\sum^{3}_{i=2} \lambda_i  \mathbf{H}_i$ is a product of a unitary matrix and itself, and hence becomes an identity matrix. As a result, the equality in \eqref{eqlity1} becomes possible, i.e., $\boldsymbol \lambda =0$ becomes possible, which can lead to a situation that $\mathbf{W}$ is full-rank. 

\subsubsection{The two-user case with real-valued  channel coefficients} In this case, \eqref{eqlity1} can be viewed as a set of $3$ linear equations, one to ensure the diagonal elements of  $ \lambda_1   \mathbf{G}  +\sum^{3}_{i=2} \lambda_i  \mathbf{H}_i $ to be the same, the other two to ensure the off-diagonal elements to be zero. As a result, it is possible to find $\lambda_i$, $1\leq i \leq 3$, to satisfy \eqref{eqlity1}, i.e., $\boldsymbol \lambda =0$ becomes possible and  $\mathbf{W}$ is not necessarily rank-one. 

\subsubsection{The general case with $K>2$} For the general case, to establish the rank-one conclusion, it is not sufficient to just show $\mathbf{W}$ rank-deficient. The key step to establish the rank-one conclusion is to  rewrite  \eqref{kkt1} as follows:
\begin{align}\label{kkt2}
\boldsymbol \lambda = \lambda_{K+2}\mathbf{I}- \left(\lambda_1   \mathbf{G}  +\sum^{K+1}_{i=2} \lambda_i  \mathbf{H}_i   \right).
\end{align}
Assuming that $\left(\lambda_1   \mathbf{G}  +\sum^{K+1}_{i=2} \lambda_i  \mathbf{H}_i   \right)$ has two different   largest eigenvalues, denoted by $\lambda^*_1$ and $\lambda^*_2$, $\lambda^*_1>\lambda^*_2$, $\mathbf{W}$ can be proved to be rank-one, by using the following steps:
\begin{itemize}
\item If $\lambda^*_1<\lambda_{K+2}$, $\boldsymbol \lambda $ is full-rank, which is not possible due to the constraint $\boldsymbol \lambda \mathbf{W}=0$ shown in \eqref{kktxxx}. 

\item If $\lambda^*_1=\lambda_{K+2}$, and $\lambda^*_2< \lambda_{K+2}$, the dimension of the null space of $\boldsymbol \lambda $ is one, and hence $\mathbf{W}$ is rank-one, due to the constraint $\boldsymbol \lambda \mathbf{W}=0$.

\item If $\lambda^*_2\geq  \lambda_{K+2}$, $\boldsymbol \lambda $ has at least one  negative eigenvalue, which is not possible, since   $\boldsymbol \lambda $ is   positive semi-definite. 
\end{itemize} 

Unfortunately, our carried out simulation results indicate that   $\left(\lambda_1   \mathbf{G}  +\sum^{K+1}_{i=2} \lambda_i  \mathbf{H}_i   \right)$  can have two repeated   eigenvalues. When this situation happens,  the rank of $\mathbf{W}$ becomes two, and the use of Gaussian randomization procedure is needed.  
It is worth to point out that this rank-two situation is  observed for problem \ref{pb:9} only, but not for   problems \ref{pb:3} and \ref{pb:6}.. \vspace{-1em}
  \section{simulation}
  In this section, the computer simulation results are used to evaluate the performance of the two beamforming strategies. 
  
 A deterministic two-user case is first focused on in Fig. \ref{fig1}, where   Strategy II   yields the optimal solution\footnote{To avoid the sub-optimality issue discussed in Section III,  the users' real-valued channels are generated with small complex-valued  perturbations.   } , as indicated by Proposition \ref{proposition1}. The aim is to investigate   whether the use of Strategy I can also lead to the optimal performance. As can be seen from the figure, for the case where the primary users' channels are orthogonal to each other and the secondary user's channel is aligned with one primary user's, the use of Strategy I can yield the optimal performance. Fig. \ref{fig1} also shows that the optimal choice of $\mathbf{w}$ is closely related to the choices of $\mathbf{g}$ and   $P_0$ . For example, for $\theta=\frac{\pi}{4}$, with $P_0=27$ dBm,    $\mathbf{w}_{\rm Case 0}$ is optimal, whereas both $\mathbf{w}_{\rm Case 3-1}$ and  $\mathbf{w}_{\rm Case 3-2}$ are optimal if $P_0=30$ dBm.

 \begin{figure}[t]\centering \vspace{-3em}
    \epsfig{file=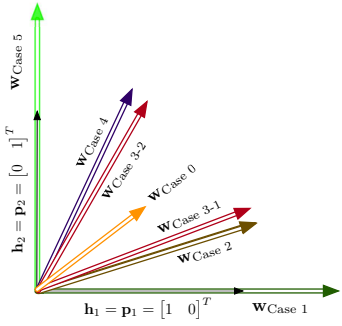, width=0.25\textwidth, clip=}\vspace{-0.5em}
\caption{ Illustration of  optimal beamforming, where   $\mathbf{g}= \begin{bmatrix} \sin\left(\theta\right) &\cos\left(\theta\right)\end{bmatrix}^T $ and $\theta$ is chosen as shown in Table \ref{table1}. For illustration purposes, large scale path loss is omitted, $P_{\rm SDMA}=30$ dBm, $N=K$, $R_i=1$ bit per channel use (BPCU), and $\sigma^2=-10$ dBm.  In particular,  $\mathbf{w}_{\text{Case 1}}=1.50\begin{bmatrix}1 &0\end{bmatrix}^T$, $\mathbf{w}_{\text{Case 2}}=\begin{bmatrix}1.18&0.32 \end{bmatrix}^T$, $\mathbf{w}_{\text{Case 3-1}}=\begin{bmatrix}1.13&0.47 	 \end{bmatrix}^T$,
$\mathbf{w}_{\text{Case 3-2}}=\begin{bmatrix}0.47 &1.13 	 \end{bmatrix}^T$, $\mathbf{w}_{\text{Case 4}}=\begin{bmatrix}0.32 &1.18\end{bmatrix}^T$, $\mathbf{w}_{\text{Case 5}}=1.50\begin{bmatrix}0 &1\end{bmatrix}^T$, $\mathbf{w}_{\text{Case 0}}=0.50\begin{bmatrix}1 &1\end{bmatrix}^T$. 
  \vspace{-2em}    }\label{fig1} 
\end{figure}

\begin{table}\vspace{-0em}
  \centering
  \caption{The Deterministic Cases Used to Generate Fig. \ref{fig1}   }\vspace{-1em}
\label{table1}
  \begin{tabular}{|c|l|c| l|}
   \hline
   & $\theta $	&$P_0$ in dBm	 &Adopted Beamforming   \\
    \hline
   Case 1& $\frac{\pi}{2}$	&$ 30$	 &Strategies  I and II  (SIC at ${\rm U}_1$)    \\\hline 
    Case 2&  $\frac{\pi}{3}$& $ 30$  &Strategy II   (SIC at ${\rm U}_1$)  \\\hline
        Case 3&  $\frac{\pi}{4}$& $ 30$  &Strategy II   (SIC at ${\rm U}_1$ or ${\rm U}_2$)   \\\hline
            Case 4&  $\frac{\pi}{6}$&$ 30$ &Strategy II   (SIC at ${\rm U}_2$ )  \\\hline
                            Case 5&  $0$& $ 30$  &Strategies I and  II   (SIC at ${\rm U}_2$)    \\\hline
                Case 0& $\frac{\pi}{4}$&$ 27$  &Strategy  II   (no SIC at ${\rm U}_1$ $\&$ ${\rm U}_2$)  \\\hline
  \end{tabular}\vspace{-1.5em}
\end{table}

In Figs. \ref{fig2} and \ref{fig3}, both path loss and small scale fading are considered. In particular, the $K$ primary users are randomly located in a square with edge $6$ m and the base station   located at its center, where the location of the secondary user is fixed at $(0,1)$ m.   The path loss exponent is set as $3$, the noise power is $\sigma^2=-94$ dBm, $N=K$, and $R_i=1$ bit per channel use (BPCU).   Fig. \ref{fig2} focuses on  the two-user case, and shows that   Strategy II outperforms Strategy I, at the price of high computational complexity.  In Fig. \ref{fig3}, a general multi-user case is considered.   Note that the implementation of Strategy II becomes prohibitively complicated for a large $K$ since there can be a huge number of possible $\mathcal{S}$ and  for each choice of $\mathcal{S}$ an SDR problem needs to be solved. Therefore, a simplified Strategy II is adopted by focusing on the following two cases, one to allow all the primary users to decode their own signals directly and the other to allow the primary user with the largest channel vector norm   to carry out SIC. As shown in Fig. \ref{fig3}, the performance gain of Strategy II   over  Strategy I is increased by increasing $K$; however, it is important to point out that the   complexity of SDR also grows by increasing $K$.

 \begin{figure}[t]\centering \vspace{-3em}
    \epsfig{file=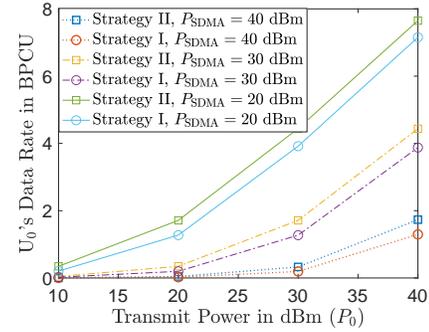, width=0.3\textwidth, clip=}\vspace{-0.5em}
\caption{ Performance comparison of the two strategies with different choices of $P_0$.
  \vspace{-1em}    }\label{fig2} 
\end{figure}

 \begin{figure}[t]\centering \vspace{-0.51em}
    \epsfig{file=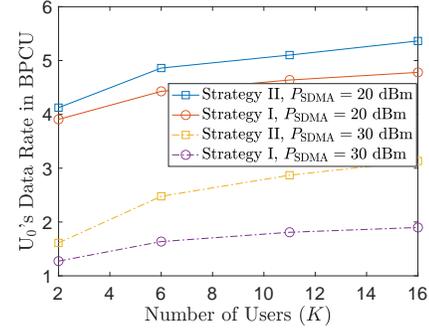, width=0.3\textwidth, clip=}\vspace{-0.5em}
\caption{ Performance comparison of the two strategies with different choices of $K$. $P_0=30$ dBm. 
  \vspace{-2em}    }\label{fig3} 
\end{figure}

\vspace{-1em}
\section{Conclusions}
 In this letter, the design of NOMA beamforming has been  investigated in an SDMA legacy system. In particular, two popular beamforming designs in the NOMA literature have been adopted and shown  to realize different tradeoffs between system performance and complexity.  
 \vspace{-1em}
 \section{Acknowledgements}
We thank Drs. Xiaofang Sun, Fang Fang, Jingjing Cui and Hong Xing for useful discussions related to SDR. 
 \vspace{-1em}
\bibliographystyle{IEEEtran}
\bibliography{IEEEfull,trasfer}

  \end{document}